\newtheorem{prop}{Proposition}[section]
\newtheorem{theorem}{Theorem}[section]
\newtheorem{cor}{Corollary}[section]
\newtheorem{lemma}{Lemma}[section]
\newtheorem{remark}{Remark}[section]
\newtheorem{example}{Example}[section]
\numberwithin{equation}{section}
\makeatletter \@addtoreset{equation}{section}
\def\R{\mathbb{R}}
\def\E{\mathbb{E}}
\def\P{\mathbb{P}}
\def\real{\mathbb{R}}
\title{Analysis of the Risk-Sharing Principal-Agent problem through the Reverse-H\"older inequality}
\author[1]{Jessica Martin\footnote{jessica.martin@insa-toulouse.fr}}
\author[1]{Anthony R\'eveillac\footnote{anthony.reveillac@insa-toulouse.fr}}
\affil[1]{INSA de Toulouse\\ IMT UMR CNRS 5219\\ Universit\'e de Toulouse\\ 135 Avenue de Rangueil 31077 Toulouse Cedex 4 France}
\begin{document}
\allowdisplaybreaks
\maketitle

\begin{abstract}
In this paper we provide an alternative framework to tackle the first-best Principal-Agent problem under CARA utilities. This framework leads to both a proof of existence and uniqueness of the solution to the Risk-Sharing problem under very general assumptions on the underlying contract space. Our analysis relies on an optimal decomposition of the expected utility of the Principal in terms of the reservation utility of the Agent and works both in a discrete time and continuous time setting. As a by-product this approach provides a novel way of characterizing the optimal contract in the CARA setting, which is as an alternative to the widely used Lagrangian method, and some analysis of the optimum.
\end{abstract}

\let\thefootnote\relax\footnote{\\ \textbf{Keywords : } Principal Agent problem; First-Best; Risk-Sharing; Borch rule; Reverse H\"older inequality; Optimal Contracting Theory; Existence; Uniqueness.}


\section{Introduction}
\label{section:introduction}

Many economic situations in the areas of optimal contracting or incentive policy design can be gathered under the so-called Principal-Agent formulation where an Agent is asked to perform an action on behalf of a Principal in exchange for a wage. This formulation has known a renewed interest since the seminal paper \cite{HM87} by Holmstr\"om and Milgrom in 1987 which brought to light a new method to solve a particular type of Principal-Agent problem in continuous time. Since then, this approach has been expanded.  For instance, some notable contributions include the works by Schattler and Sung in \cite{SSung93} and \cite{SSung97}, the works of Sung in \cite{Sung95}, and more recently the contribution \cite{Sannikov08} of  Sannikov. As we just mentioned the focal point of these works is one particular type of Principal-Agent problem, the Moral Hazard case, and the amount of interest for this problem is on par with its realism. Indeed,  it involves optimal contracting in a situation where the Principal cannot control and monitor the Agent's actions. This matches reasonably with  real-world applications. An investor who delegates his portfolio management to a banker cannot expect to control the banker's actions. A group of shareholders is not able to monitor every doing of the CEO of its company. It is thus said that the Moral Hazard problem models optimal contracting under partial information. \\

To measure the impact of this asymmetry of information, it is enlightening to compare the optimal contract with the optimal Risk-Sharing rule. This rule comes about in a full information situation, when the Principal dictates the Agent's actions and guarantees a reservation utility to him. Specific literature on this problem, which is called the first-best Principal-Agent problem, includes important works such as  \cite{CZ12}, \cite{CWZ05} and \cite{Muller98}. However, it is overall more scarce than the literature on the Moral Hazard problem. There are at least two reasons behind this. The first important reason is that the first-best problem is also a so-called Risk-Sharing problem. Risk-Sharing problems are documented in several fields of economics and insurance (but in another context than the one Principal and one Agent problem) and more general literature exists using for example risk-measure approaches.  We refer the reader to Section 1 of \cite{Embrechts17} for a review of the existing literature on the Risk-Sharing question. 
 Another reason for a scarce literature is that the first-best Principal-Agent problem is less realistic than other variations on the Principal-Agent problem such as Moral Hazard, since it does not fit strictly speaking in an incentive policy design framework. This lack of realism means that the problem is less subject to extensions than the Moral Hazard problem or other variations on the Principal-Agent problem. \\

When analyzing Principal-Agent problems of any form two main questions arise. Does a solution to the set problem actually exist? And if so, can we obtain some form of characterization for it? The existence question is fundamental as without any such guarantees trying to characterize a solution is pointless. For this reason, obtaining existence results has aroused a considerable amount of interest across the past few decades and for different types of Principal-Agent problems. In 1987, Page considered the Moral Hazard problem  in \cite{Page87} and was able to prove existence. To do so, he used a set of assumptions including a compact set for the agent's actions, bounded wages, and a compact set for monetary outcomes. The same year and in their seminal paper \cite{HM87}, Hölmstrom and Milgrom proved existence in a single period Moral Hazard problem with a discrete (and thus bounded) wealth process, a compact set of feasible actions, and CARA utilities.  They then build a continuous time model that approaches the discrete one. Shortly after, Schättler and Sung extended the continuous time model of \cite{HM87} and established in \cite{SSung93} a set of necessary and sufficient conditions for existence of solutions in a continuous time setting. One such condition is that the agent's actions belong to a compact set but they also suppose that the wage be bounded below before proving that their hypothesis can be slightly lifted in order to include a few more general wages such as linear ones. A more recent work is that of Jewitt, Kadan and Swinkels in \cite{JKS} where one key assumption is that the wealth process is supposed bounded. 
Many works thus consider quite restrictive hypothesis to obtain existence. However, a few papers aim to provide more general existence conditions. These include the work of Kadan, Reny and Swinkels in \cite{KRS} for the Moral Hazard and Adverse Selection setting using only a bounded wage assumption. The Adverse Selection case in continuous-time is also tackled by Carlier in \cite{Carlier01} using calculus of variations and h-convexity. This work is further extended by Carlier and Zhang in \cite{Carlier19} using calculus of variations once again along with an important assumption to obtain the required compactness. As a consequence to this assumption, the authors consider mainly Lipschitz utilities in their examples. \\

This literature analysis leads us to several important conclusions. A first conclusion is that the first-best problem is a benchmark problem for Principal-Agent analysis yet specific literature on it is scarce. Furthering the literature on this problem and on its possible extensions will help gain better understanding of Principal-Agent problems as a whole. Another conclusion is that a key setting for Principal-Agent problems involves CARA utility and a discrete or gaussian wealth process. In the latter case, the model becomes the so-called LEN model. This model is for example discussed in \cite{Muller98} by Muller who characterizes the solutions to the problem in a continuous time first-best setting. Finally, establishing solution existence is both key and non-trivial for Principal-Agent problems before going ahead with any form of characterization. Indeed due to the stochastic nature of the problems, the underlying contract spaces are in most cases non-compact. Additional assumptions on models are then needed to obtain compactness, particularly in discrete time settings where results from the field of optimal stochastic control cannot be used, to be able to apply standard existence theorems from the field of optimization. The literature therefore includes few existence results for Principal-Agent problems in general settings. \\

In this paper we aim to bring to light a result that is at the crossroads of these three conclusions. Indeed, we consider a Principal who owns a firm (or a portfolio) whose wealth is subject to uncertainty and an Agent to whom a wage is offered in exchange for a participation to the firm (and we also refer the reader to \cite{CZ12} and \cite{CWZ05} for more details). Their respective utilities $U_P$ and $U_A$ are CARA utilities defined as:
$$ U_P(x) = -\exp(-\gamma_P x) \quad \text{and} \quad U_A(x) = -\exp(-\gamma_Ax),$$
where $\gamma_P$ and $\gamma_A$ are two fixed risk aversion coefficients. In a single period setting, in order to reduce his exposure to uncertainty, the Principal hires the Agent at time $t=0$ in a take-it-or-leave-it contract in which (if accepted) the Agent is asked to produce an effort $a \geq 0$ at time $t=0$ in return for the payment of a wage $W$ at time $t=1$. The wealth of the Agent then becomes $X^a$ at time $t=1$ 
with
$$ X^a = x_0 + a + B,$$
where $x_0$ is the initial wealth at time $t=0$ and $B$ is a square-integrable random variable modeling the stochastic exposure of the Principal. Note that in this simple model, we assume that the Principal fully observes both the outcome $X^a$ (at time $t=1$) and the action $a$ of the Agent, meaning that the Principal actually dictates this action to the Agent. \footnote{A typical example of such a situation is when the Agent is the Principal himself, meaning that as a manager of the firm, the Principal decides his level of work and the salary he pays himself for it. Note that as the Principal decides of the action of the agent, this action must be a positive effort in the firm as $a<0$ would mean that the Agent would sabotage the firm, which of course does not make any sense. Another example is a monopoly type situation. } To model the cost of effort for the Agent, we introduce a function $\kappa$ defined on $\R_+$ and chosen to be strictly convex, continuous and non-decreasing. A simple example of such a function is the quadratic cost function which, for a fixed constant $K>0$, is defined for any $x$ in $\R_+$ as $\kappa(x) = K\frac{|x|^2}{2}.$ 
\\\\\
\noindent
An important remark is that the contract (which from now on will be modeled by the pair (wage, action)$=(W,a)$) is a take-it-or-leave-it contract that will be accepted by the Agent if a Participation Constraint (PC) condition (or reservation utility constraint) given below is satisfied : 
\begin{equation}
\label{eq:PCIntro}
\E\left[U_A\left(W-\kappa(a)\right)\right] \geq U_A(y_0),
\end{equation}
where $y_0\geq 0$ represents the level of requirement for the Agent to accept the contract.\\\\
\noindent
In this setting, we analyze the first-best problem which simply writes as~: 
\begin{equation}
\label{eq:ProbIntro}
\sup_{(W, a) \textrm{ subject to } (\ref{eq:PCIntro})} \E\left[U_P(X^a-W)\right], 
\end{equation}
and which can also be written in a continuous time setting. One classical way of proving existence of a solution to such a problem is to use a variational approach where we find a topology of the underlying contract space that ensures upper semi-continuity, concavity and coercivity of the Principal's expected utility, whilst rendering the contract space convex and closed.  The sticking point here is often coercivity and more particularly coercivity in $W$. This may be ensured in some cases when $B$ is bounded. However it is difficult to do so under more general  assumptions.   \\

In this paper we provide an alternative proof of existence and uniqueness of solutions to Problem (\ref{eq:ProbIntro})
 with next to no further assumptions. To do so, we exploit the properties of the exponential function in the utilities to first decompose the Principal's expected utility in terms of that of the Agent. Using this decomposition and the so-called Reverse Hölder inequality, we are able to upper bound the Principal's expected utility by some constant depending only on the model's parameters. We then derive an optimal decomposition, relying on the Reverse Hölder equality condition, which provides us with a unique admissible contract attaining the upper bound. As a by-product this allows us to shed a new light on the Borch rule for Risk-Sharing. We thus have an existence and uniqueness proof for the Risk-Sharing problem under CARA utilities along with a characterization of the optimum. We believe that the strength of this method lies in the generality of the settings that it allows us to consider. Indeed we are able to deal with both the discrete and continuous time case, actions that belong to $\mathbb{R}^+$ or even to any compact subset of $\mathbb{R}^+$, a general effort cost function $\kappa$, and more importantly any form on random variable / random process that has exponential moments (this last assumption is a purely technical as we are optimizing exponential expected utilities). Furthermore, the proof does not rely on any a priori intuition on the form of the optimal contract. We thus provide an existence and uniqueness result for the first-best Principal Agent problem under CARA utilities in more general settings that those generally considered in the literature. As a by-product to this approach we also obtain through our decomposition of the Principal's expected utility a revisit of the well-known Borch rule for Risk-Sharing. Finally, we provide some further analysis on the effect of different parameters on the Principal's take-home utility. Indeed, we analyze the effect of enforcing a sub-optimal action as well as study the question of a Principal who has to choose between two different Agents. \\

We proceed as follows. First in Section \ref{section:discrete} we tackle the single-period problem as described in the introduction. Then in Section \ref{section:an}, we proceed with some contract comparison in the single-period setting. Finally in Section \ref{section:exten} we discuss on some extensions, including the continuous time problem, before collecting some of the lengthier proofs of our results in Section \ref{section:proofs} and concluding in Section \ref{section:conc}. 

\section{The single-period Risk-Sharing problem}
\label{section:discrete}

In the following we discuss Problem (\ref{eq:ProbIntro}) as presented in the introduction. To do so we exploit a key result called the Reverse H\"older inequality. As its name suggests it is closely linked to the more well-known H\"older inequality. We state the result below. 
\begin{prop}[Reverse H\"older inequality]
 \label{prop:RH}
 Let $p \in (1 , + \infty]$. Let $F$ and $G$ be two random variables such that $G\neq 0$, $\P$-a.s.. Then : 
 \begin{itemize}
 \item[(i)] The Reverse H{\"o}lder inequality holds, that is,
 $$\E[|F \times G|] \geq \E\left[|F|^{\frac1p}\right]^p \times \E\left[|G|^{\frac{-1}{p-1}} \right]^{-p+1}.$$ 
 \item[(ii)] In addition, the inequality is an equality, that is,
 $$\E[|F \times G|] = \E\left[|F|^{\frac1p}\right]^p \times \E\left[|G|^{\frac{-1}{p-1}} \right]^{-p+1}$$
 if and only if there exists some constant (that is non-random) $\alpha \geq 0$ such that $|F| = \alpha |G|^{-\frac{p}{p-1}}$.\\
 \end{itemize}
 \end{prop}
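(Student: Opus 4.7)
The plan is to derive the reverse H\"older inequality from the classical H\"older inequality by a suitable algebraic change of variables, and then read off the equality case from the classical equality case. I restrict to $p\in(1,+\infty)$ for the main argument; the case $p=+\infty$ is handled as a limiting/degenerate case (with the convention $\E[|G|^0]^0 = 1$, so that the statement reduces to $\E[|FG|]\geq \esssup|F|\cdot \P(\text{supp})$, or it can simply be excluded from the main proof with a separate remark).

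For (i), the key observation is the identity $|F|^{1/p} = |FG|^{1/p}\cdot |G|^{-1/p}$, which makes sense $\P$-a.s. since $G\neq 0$ almost surely. I would then apply the classical H\"older inequality with conjugate exponents $p$ and $\frac{p}{p-1}$ to the product on the right:
\begin{equation*}
\E\!\left[|F|^{1/p}\right] \;=\; \E\!\left[|FG|^{1/p}\cdot |G|^{-1/p}\right] \;\leq\; \E\!\left[|FG|\right]^{1/p}\cdot \E\!\left[|G|^{-\frac{1}{p-1}}\right]^{\frac{p-1}{p}}.
\end{equation*}
Raising both sides to the power $p$ and rearranging yields exactly
\begin{equation*}
\E[|FG|]\;\geq\; \E\!\left[|F|^{1/p}\right]^{p}\cdot \E\!\left[|G|^{-\frac{1}{p-1}}\right]^{-p+1},
\end{equation*}
which is (i). The manipulation is purely formal provided the quantities appearing are finite; if $\E[|G|^{-1/(p-1)}] = +\infty$ then the right-hand side vanishes and (i) is trivial, so there is no loss in assuming finiteness.

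For (ii), the strategy is to invoke the equality case in the classical H\"older inequality applied above. Equality in H\"older $\E[|XY|]\leq \E[|X|^p]^{1/p}\E[|Y|^{p/(p-1)}]^{(p-1)/p}$ holds if and only if there exist non-negative constants $c_1,c_2$, not both zero, such that $c_1 |X|^p = c_2 |Y|^{p/(p-1)}$ $\P$-a.s. Substituting $X=|FG|^{1/p}$ and $Y=|G|^{-1/p}$ gives $c_1|FG| = c_2 |G|^{-1/(p-1)}$, i.e.\ (dividing by $c_1|G|>0$, the case $c_1=0$ being degenerate and easily ruled out unless $G$ is degenerate)
\begin{equation*}
|F| \;=\; \frac{c_2}{c_1}\,|G|^{-1-\frac{1}{p-1}} \;=\; \alpha\, |G|^{-\frac{p}{p-1}},\qquad \alpha:=\frac{c_2}{c_1}\geq 0,
\end{equation*}
which is exactly the asserted equality condition. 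Conversely, plugging $|F|=\alpha|G|^{-p/(p-1)}$ into both sides of the inequality in (i) shows by a direct computation that they coincide, yielding the ``if'' direction.

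The routine checks are the classical H\"older application and the algebra of exponents, which should be straightforward. The only mild obstacle is the bookkeeping of the equality case: one must carefully reinterpret the classical H\"older equality condition on $|FG|^{1/p}$ and $|G|^{-1/p}$ in terms of $|F|$ and $|G|$, and treat the degenerate sub-cases (one of the expectations being zero or infinite, or $c_1=0$) with a sentence rather than letting them trivially conclude the argument. Aside from this, the proof is essentially a one-line reduction to classical H\"older.
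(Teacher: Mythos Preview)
Your argument is correct and is the standard derivation of the reverse H\"older inequality from the classical one: the substitution $|F|^{1/p}=|FG|^{1/p}\,|G|^{-1/p}$ followed by classical H\"older with exponents $p$ and $p/(p-1)$ is exactly the right trick, and your handling of the equality case via the classical equality condition is accurate (including the exponent bookkeeping $-1-\tfrac{1}{p-1}=-\tfrac{p}{p-1}$).

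There is, however, nothing to compare against: the paper does not supply its own proof of this proposition. It is stated as a known result and simply invoked in Step~1 of Section~\ref{section:main}; the only proofs in Section~\ref{section:proofs} are those of Proposition~\ref{prop:upperbound}, Theorem~\ref{th:BR}, Theorem~\ref{th:Opti}, and Lemma~\ref{Lemma:UI}. So your write-up stands on its own as a complete justification of a result the paper takes for granted. One minor remark: your discussion of the case $p=+\infty$ is loose (and you use a macro \texttt{\textbackslash esssup} that is not defined in the paper), but since the paper only ever applies the proposition with the finite value $p=\frac{\gamma_A+\gamma_P}{\gamma_A}$, this is harmless.
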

 
In order to proceed with our analysis of Problem (\ref{eq:ProbIntro}), we define the set of admissible contracts for the Risk-Sharing problem $C^{adm}$ as follows : 
 $$C^{adm} = \left\{ (W,a) \in L^2(\Omega) \times \mathbb{R}_+ \quad \text{satisfying} \quad (\ref{eq:PCIntro}) \right\},$$
 where $L^2(\Omega)$ is the set of square-integrable random variables. Also, for a given $a$ in $\mathbb{R}_+$ we define the set of related admissible wages : 
 $$ \mathcal{W}(a) = \left\{ W \in L^2(\Omega), \quad \text{such that} \quad (W,a) \quad \text{in} \quad C^{adm} \right\}.$$
 With these notations in mind, we rewrite Problem (\ref{eq:ProbIntro}) as : 
 
 \begin{equation}
\label{eq:ProbRS}
\sup_{(W, a) \in C^{adm}} \E\left[U_P(X^a-W)\right], 
\end{equation}

or alternatively : 
 \begin{equation}
\label{eq:ProbRS2}
\sup_{W \in \mathcal{W}(a) }\sup_{a \in \mathbb{R}^+} \E\left[U_P(X^a-W)\right]. 
\end{equation}

Our aim is to prove existence of a solution to this problem. To do so, we use the inequality of Proposition \ref{prop:RH} and the multiplicative nature of the Principal's utility in order to obtain an attainable upper bound. As a first step, we apply the Reverse H\"older inequality and a decomposition of the Principal's expected utility. 
\begin{theorem}
\label{th:main1}
Let $a$ in $\R^+$. We have :
\begin{itemize}
\item[(i)] For any $W$ in $\mathcal{W}(a)$, 
\begin{align}
\label{eq:allW}
&\E\left[U_P(X^a-W)\right] \nonumber\\
&\leq \E\left[\left|U_P\left(X^a -  \kappa(a) \right)\right|^{\frac{\gamma_A}{\gamma_A + \gamma_P}}\right]^{\frac{\gamma_A+\gamma_P}{\gamma_A}} \times \E\left[ U_A\left(W- \kappa(a) \right)\right]^{-\frac{\gamma_P}{\gamma_A}}. 
\end{align}

\item[(ii)] For $W$ in $\mathcal{W}(a)$. The following conditions are equivalent :
\begin{itemize}
\item[(ii\textquotesingle
)] 
\begin{align}
\label{eq:WLin}
&\E\left[U_P(X^a-W)\right] \nonumber\\
&= \E\left[\left|U_P\left(X^a -  \kappa(a) \right)\right|^{\frac{\gamma_A}{\gamma_A + \gamma_P}}\right]^{\frac{\gamma_A+\gamma_P}{\gamma_A}} \times \E\left[ U_A\left(W- \kappa(a) \right)\right]^{-\frac{\gamma_P}{\gamma_A}}
\end{align}
\item[(ii\textquotesingle\textquotesingle
)] $(a,W)$ satisfies the Borch rule, that is, there exists $\alpha$ in $\real_+^*$ such that : 
\begin{equation}
\label{eq:BRNotre}
\frac{U_P'\left(X^a -W\right)}{U_A'\left(W-  \kappa(a) \right)}=\alpha, \quad \P-a.s..
\end{equation} 
\item[(ii\textquotesingle\textquotesingle\textquotesingle)] The wage $W$ is of the form : 
$$ W= \frac{\gamma_P}{\gamma_P+\gamma_A} X^a + \beta, \quad \text{with} \; \beta \in \R,$$ 
such that $(W,a) \in \mathcal{C}^{adm}.$
\end{itemize}
\end{itemize} 
\end{theorem}
\begin{proof}
See Section \ref{section:proofdecomp}.\\
\end{proof}

\begin{remark}
\begin{itemize}
\item[(i)]The Borch rule for Risk-Sharing, derived by Karl Borch in \cite{Borch62} and \cite{Borch74}, appears here in (\ref{eq:BRNotre}) as a condition for equality between the Principal's expected utility and its decomposition. This sheds a new light onto the rule in the CARA utility case. Indeed, it only allows for contracts that enable an isolation of the effect of the Agent's expected utillity in the Principal's expected utility. We will see further on that the Borch rule is a necessary optimality condition for the Risk-Sharing problem just like it is when using the Lagrangian method.
\item[(ii)] When analyzing the first-best problem it seems intuitive that the Principal and the Agent's utilities should be of opposite effect : the Principal should want to maximize his utility whilst minimizing that of the Agent. This is encompassed in (\ref{eq:allW}) : due to the negative power $-\frac{\gamma_P}{\gamma_A}$, the upper bound increases as the Agent's utility decreases. 
\end{itemize} 
\end{remark}

Now that we have a decomposition of the Principal's expected utility and a condition for it to hold exactly (Borch rule), we turn to further exploiting this decomposition and the appearance of the Agent's expected utility in order to obtain a bound that is free of $W$ and $a$. We do this in two stages in the following two Propositions. 

\begin{prop}\label{PCoptim}
Let $a$ in $\R_+$. For any $W$ in $\mathcal{W}(a)$, 
\begin{align*}
&\E\left[\left|U_P\left(X^a -  \kappa(a) \right)\right|^{\frac{\gamma_A}{\gamma_A + \gamma_P}}\right]^{\frac{\gamma_A+\gamma_P}{\gamma_A}} \times \E\left[ U_A\left(W-  \kappa(a) \right)\right]^{-\frac{\gamma_P}{\gamma_A}} \\
& \leq  U_P(-y_0) \times \inf_{\tilde a \in \mathbb{R}_+} \E\left[\left|U_P\left(X^{\tilde a} -  \kappa(\tilde a) \right)\right|^{\frac{\gamma_A}{\gamma_A + \gamma_P}}\right]^{\frac{\gamma_A+\gamma_P}{\gamma_A}}
\end{align*}
\end{prop}
\begin{proof}
Apply the participation constraint (\ref{eq:PCIntro}) to the right hand term and optimize in $a$ in the left hand term.
\end{proof}

It thus remains to perform the optimization in $a$. To do so, we introduce the  following two notations : 
$$\widetilde{\kappa}(p) := \sup_{x \in \R_+} \left( px-\kappa(x)    \right), \text{ for any } p \geq 0,$$
and
$$\kappa^*(p) := \text{argsup}_{x \in \R_+} \left( px-\kappa(x)    \right), \text{ for any } p \geq 0.$$ 
$\tilde{\kappa}$ is the Legendre transform of $\kappa$, and $\kappa^*$ is its related argument. These are well defined due to the convexity of $\kappa$. We use these two notations in the following Proposition to perform the optimization in $a$ and obtain our upper bound. 
\begin{prop}\label{prop:upperbound}
For any $a$ in $\R_+$ it holds that :
\begin{align*}
\E\left[\left|U_P\left(X^a - \kappa(a) \right)\right|^{\frac{\gamma_A}{\gamma_A + \gamma_P}}\right]^{\frac{\gamma_A+\gamma_P}{\gamma_A}} & \geq \E\left[\left|U_P\left(X^{a^*} -  \kappa(a^*) \right)\right|^{\frac{\gamma_A}{\gamma_A + \gamma_P}}\right]^{\frac{\gamma_A+\gamma_P}{\gamma_A}}\\
& = \exp\left(-\gamma_P (x_0 + \tilde{\kappa}(1))\right) \E\left[  \exp\left(\frac{-\gamma_P \gamma_A}{\gamma_P+\gamma_A} B \right)\right]^{\frac{\gamma_A + \gamma_P}{\gamma_A}},
\end{align*}
where $a^* := \kappa^*(1)$. 
We thus have for any $(W,a)$ in $C^{adm}$ : 
\begin{align*}
&\E\left[\left|U_P\left(X^a -   \kappa(a) \right)\right|^{\frac{\gamma_A}{\gamma_A + \gamma_P}}\right]^{\frac{\gamma_A+\gamma_P}{\gamma_A}} \times \E\left[ U_A\left(W- \kappa(a)  \right)\right]^{-\frac{\gamma_P}{\gamma_A}} \\
&\leq U_P(x_0-y_0 + \widetilde{\kappa}(1))  \E\left[  \exp\left(\frac{-\gamma_P \gamma_A}{\gamma_P+\gamma_A} B \right)\right]^{\frac{\gamma_A + \gamma_P}{\gamma_A}}.
\end{align*}
\end{prop}
\begin{proof}
See Section \ref{section:proofPropupperBound}.
\end{proof}

The combination of Theorem \ref{th:main1}, Proposition \ref{PCoptim} and Proposition \ref{prop:upperbound} allows us to exploit a decomposition of the Principal's expected utility, the Participation Constraint and an optimization in $a$ in order to upper bound the Principal's value function. The upper bound that we obtain is free of $W$ and $a$ and is key for our existence proof. Indeed, we are now able to show that this upper bound is attained for an admissible contract, which is in fact unique. This gives us our main result, which is simultaneously an existence, uniqueness, and characterization result for solutions to the first-best problem and is the object of the following key Theorem. 
\begin{theorem}[Existence, Uniqueness and Characterization]
\label{theo:main2}
\begin{itemize}
\item[(i)] Consider $(W,a)$ in any contract in $C^{adm}$, then it holds that : 
$$\E\left[U_P(X^a-W)\right] \leq U_P(x_0-y_0 + \widetilde{\kappa}(1))  \E\left[  \exp\left(\frac{-\gamma_P \gamma_A}{\gamma_P+\gamma_A} B \right)\right]^{\frac{\gamma_A + \gamma_P}{\gamma_A}}.$$

\item[(ii)]Now let $(W^*, a^*)$ be such that :  $$a^* = \kappa^*(1)$$ and 
 $$W^* = \frac{\gamma_P}{\gamma_P + \gamma_A} X^{a^*} + \beta^*,$$ 
$$ \beta^*:=  y_0+ \kappa(\kappa^*(1)) - \frac{\gamma_P}{\gamma_A+\gamma_P} \left(x_0+  \kappa^*(1) \right) - \frac{1}{\gamma_A}\ln\left( \mathbb{E}\left[\exp\left(\frac{-\gamma_P \gamma_A}{\gamma_P+\gamma_A} B\right)\right] \right).$$
Then $(W^*, a^*)$ is the only contact verifying (\ref{eq:BRNotre}) and saturating the participation constraint. 
\item[(iii)]Furthermore : 
 \begin{align*}
& \E\left[U_P\left(X^{a^*} - W^*\right)\right] =  U_P(x_0-y_0 + \widetilde{\kappa}(1))  \E\left[  \exp\left(\frac{-\gamma_P \gamma_A}{\gamma_P+\gamma_A} B \right)\right]^{\frac{\gamma_A + \gamma_P}{\gamma_A}}.
\end{align*}
Thus $(W^*, a^*)$ is the unique contract attaining the upper bound. It follows that for any $(W,a)$ in $C^{adm} :$
\begin{align*}
&\E\left[U_P\left(X^{a} - W\right)\right] \leq \E\left[U_P\left(X^{a^*} - W^*\right)\right], 
\end{align*}
and $(W^*, a^*)$ is the optimal contract for the first-best Principal-Agent problem. 
\end{itemize}
\end{theorem}
\begin{proof}
See Section \ref{section:ProofOpt}.
\end{proof}

\begin{remark}
Throughout this Section, we suppose that we wish to find an optimal action $a$ amongst the whole of $\R_+$. We may in fact also consider optimizations on compact subsets of $\R_+$. For $S$ such a subset, the optimal action is then :  $$ a^* = \text{argsup}_{a \in S} \quad a - \kappa(a) $$
\end{remark}

We may apply our key Theorem to a more specific Gaussian setting as follows.
\begin{example}
Consider a Gaussian setting where $B \sim \mathcal{N}(0,1),$ and $\kappa(x)~:=~K \frac{x^2}{2}$ for some $K$ in $\R_+$.  Then there exists a unique optimal contact for the Risk-Sharing problem. We set : 
$$ a^* := \kappa^*(1) = \frac{1}{K},$$
$$ W^* = \frac{\gamma_P}{\gamma_A + \gamma_P} X^{a^*} + \beta^*,$$
where the optimal $\beta^*$ has the following expression :  
$$ \beta^*:= \frac{ \gamma_A |\gamma_P|^2}{2|\gamma_A+\gamma_P|^2} + y_0+ \kappa(\kappa^*(1)) - \frac{\gamma_P}{\gamma_A+\gamma_P} \left(x_0+  \kappa^*(1) \right).$$
The contract $(W^*,a^*)$ is the optimal Risk-Sharing contract in this setting. The Principal's optimal expected utility is worth : 
$$\E\left[ U_P(X^{a^*} - W^*) \right] = U_P(x_0-y_0) \exp\left(\gamma_P \left(-\widetilde{\kappa}(1) + \frac{\gamma_P \gamma_A}{2(\gamma_A+\gamma_P)} \right) \right).$$
 Our main Theorem therefore includes the well-known results for the single period Gaussian setting discussed for example in \cite{CZ12}. We note that it even allows us to go further than \cite{CZ12} and fully specify the intercept $\beta^*$ in the wage rather than leaving it dependent on the Lagrange multiplier.
\end{example}

\begin{remark}
 Theorem \ref{theo:main2} extends existence of solutions beyond the bounded wealth process setting. It does so without any assumption on the form of the optimal contract and using an analytic inequality rather than calculus of variations for which we lack coercivity. It also provides an important proof of uniqueness. The result therefore completes pre-standing results on existence and uniqueness by allowing for a general setting (general wealth process, general cost of effort function etc.) in the CARA utility case. 
\end{remark}

This reasoning discussed in this Section generalizes to continuous time settings and this will be a focus of Section \ref{section:exten}. In the meantime we go over some further analysis of the Risk-Sharing problem that can be gleaned using the Reverse-Holder inequality.

\section{A comparison of contracts through the Reverse H\"older inequality}
\label{section:an}

In this Section we use the Reverse H\"older inequality given in Proposition \ref{prop:RH} to compare different contracts in the Risk-Sharing setting. Indeed, the multiplicative nature of the decomposition of the Principal's expected utility prompts us to compare ratios of utilities under different conditions. This analysis brings to light effects of some choices the Principal may wish to make. \\

We first study the effect of enforcing another action than the optimal one. The Principal may indeed wish to over or under work the Agent in some conditions and the following Proposition quantifies the effect of such a choice on the Principal's own utility. Note that the direction of inequality (\ref{eq:ineqWa}) may seem counter-intuitive at first but this is due to the negative sign of the expected utilities. 
\begin{prop}[Effect of enforcing a sub-optimal $a$]
\label{prop:effa}
Let $a$ be any positive action and $W$ be any contract in $\mathcal{W}(a)$. Let  $(W^*, a^*)$ be the optimal contract described in Theorem \ref{theo:main2}. We define the Agent's action ratio $R(a)$ as : 
$$ R(a) := \frac{\E\left[\left|U_P\left(X^a -  \kappa(a) \right)\right|^{\frac{\gamma_A}{\gamma_A + \gamma_P}}\right]}{\E\left[\left|U_P\left(X^{a^*} -  \kappa(a^*) \right)\right|^{\frac{\gamma_A}{\gamma_A + \gamma_P}}\right]} = \left( \frac{\exp(-\gamma_P( a - \kappa(a)))}{\exp(-\gamma_P( a^* - \kappa(a^*))}\right)^{\frac{\gamma_A}{\gamma_A + \gamma_P}}, $$
and the Participation Constraint ratio $\text{C}(W,a)$ as :
$$ \text{C}(W,a) :=  \frac{\E\left[ U_A\left(W-  \kappa(a) \right)\right]}{ \E\left[ U_A\left(W^*-  \kappa(a^*) \right)\right]} =  \frac{\E\left[ U_A\left(W-  \kappa(a) \right)\right]}{ U_A(y_0)}.$$
Then it holds that : 
\begin{align}
\label{eq:ineqWa}
\frac{\E\left[U_P\left(X^{a} - W\right)\right] }{ \E\left[U_P\left(X^{a^*} - W^*\right)\right]} &\geq R(a)^{\frac{\gamma_A+\gamma_P}{\gamma_A}} \times C(W,a)^{-\frac{\gamma_P}{\gamma_A}}, 
 \end{align}
 and 
\begin{align}
\label{eq:ineqa}
\E\left[U_P\left(X^{a} - W\right)\right] & \leq R(a)^{\frac{\gamma_A + \gamma_P}{\gamma_A}}  \times \E\left[U_P\left(X^{a^*} - W^*\right)\right].
 \end{align}
 These inequalities are strict as soon as $(W,a) \neq (W^*, a^*)$. Furthermore, when $(W,a)$ bounds the Participation Constraint, $C(W,a) = 1$ and we directly obtain \ref{eq:ineqa}. Finally $R(a) \geq 1.$
 \end{prop}
 \begin{proof}
The proof of (\ref{eq:ineqWa}) is a direct consequence to applying Reverse Holder to $\E\left[U_P\left(X^{a} - W\right)\right]$ and $\E\left[U_P\left(X^{a^*} - W^*\right)\right].$ Indeed, we obtain : 
$$\E\left[U_P\left(X^{a} - W\right)\right] \leq \E\left[\left|U_P\left(X^a -  \kappa(a) \right)\right|^{\frac{\gamma_A}{\gamma_A + \gamma_P}}\right]^{\frac{\gamma_A+\gamma_P}{\gamma_A}} \times \E\left[ U_A\left(W-  \kappa(a) \right)\right]^{-\frac{\gamma_P}{\gamma_A}},$$
and through optimality for $(W^*, a^*)$ : 
$$\E\left[U_P\left(X^{a^*} - W^*\right)\right] = \E\left[\left|U_P\left(X^{a^*} -  \kappa(a^*) \right)\right|^{\frac{\gamma_A}{\gamma_A + \gamma_P}}\right]^{\frac{\gamma_A+\gamma_P}{\gamma_A}}  \times \E\left[ U_A\left(W^*-  \kappa(a^*) \right)\right]^{-\frac{\gamma_P}{\gamma_A}}.$$
Taking the ratio of the two, noting that $\E[U_P(X^{a^*}-W^*)] \leq 0$ which changes the sign of the inequality, we obtain (\ref{eq:ineqWa}). To obtain (\ref{eq:ineqa}), note that through the Participation Constraint : 
$$\E\left[ U_A\left(W-  \kappa(a) \right)\right] \geq \E\left[ U_A\left(W^*-  \kappa(a^*) \right)\right],$$
and thus (as $U_A$ is negative)
$C(W,a) \leq 1$, implying that $C(W,a)^{-\frac{\gamma_P}{\gamma_A}} \geq 1.$ Finally, we obtain $R(a) \geq 1$ directly through Proposition \ref{prop:upperbound}.
\end{proof}
 
It is apparent through (\ref{eq:ineqWa}) that the ratio of the Principal's utilities splits into a product of two ratios. One of them ($C(W,a)$) transfers the effect of the Agent's participation constraint onto the utility of the Principal : indeed, $C(W,a)$ is maximized as soon as $(W,a)$ binds (and thus minimizes) the Agent's utility. The second ratio $R(a)$ further restricts the possible values for the Agent's value function and this no matter the chosen wage. Indeed, we see in (\ref{eq:ineqa}) that a choice of action $a$ may restrict quite consequently the possible attainable utilities for the Principal, no matter whether the associated wage binds the Participation Constraint or not. We illustrate this result in the following example. 

\begin{example}
Consider a Principal whose company's activity temporarily decreases and who therefore has to underwork by half the Agent. Then the loss in expected utility incurred by the Principal, no matter the wage he pays, is quantified by :
$$ R(a)^{\frac{\gamma_A + \gamma_P}{\gamma_A}}  =  \frac{\exp(-\gamma_P( \frac{a^*}{2} - \kappa(\frac{a^*}{2})))}{\exp(-\gamma_P( a^* - \kappa(a^*))}$$
\end{example}

Of course, we see that the impact of a choice of action depends  in turn on the cost of action function $\kappa.$ For example for a function $\kappa$ that is convex yet close to linear, choosing a non-optimal action will not have as much as an effect on the Principal as when $\kappa$ is convex and quadratic. Naturally, we may further analyze the effect of $\kappa$ and wish to quantify the effect of two different cost functions. We do so in the following and the analysis is similar to above. 

\begin{prop}[Effect of the action cost function $\kappa$]
Let $(W^*,a^*)$ be the optimum obtained in Theorem \ref{theo:main2} for the Risk-Sharing problem and a given cost function $\kappa$. Let $(W,a)$ be some contract in $C^{adm}$ for some cost function $\hat{\kappa}$ and binding the Participation Constraint. Then  : 

\begin{align}
\label{eq:ineqK}
\frac{\E\left[U_P\left(X^{a} - W\right)\right] }{ \E\left[U_P\left(X^{a^*} - W^*\right)\right]} &\geq \frac{\E\left[\left|U_P\left(X^a -  \hat{\kappa}(a) \right)\right|^{\frac{\gamma_A}{\gamma_A + \gamma_P}}\right]^{\frac{\gamma_A+\gamma_P}{\gamma_A}}}{\E\left[\left|U_P\left(X^{a^*} -  \kappa(a^*) \right)\right|^{\frac{\gamma_A}{\gamma_A + \gamma_P}}\right]^{\frac{\gamma_A+\gamma_P}{\gamma_A}} } \end{align}

In particular if $W$ is of the form $\frac{\gamma_P}{\gamma_A+\gamma_P} X^{a} + \beta$ then the above inequality holds in equality. Thus, if $(W,a)$ is the optimum for some particular $\hat{\kappa}$, one can quantify the effect of $\hat{\kappa}$ on the obtained optimum relative to that obtained for $\kappa$. Indeed, it holds that : 
 
 \begin{align}
\label{eq:ineqK}
\E\left[U_P\left(X^{a} - W\right)\right]  =\frac{\E\left[\left|U_P\left(X^a -  \hat{\kappa}(a) \right)\right|^{\frac{\gamma_A}{\gamma_A + \gamma_P}}\right]^{\frac{\gamma_A+\gamma_P}{\gamma_A}}}{\E\left[\left|U_P\left(X^{a^*} -  \kappa(a^*) \right)\right|^{\frac{\gamma_A}{\gamma_A + \gamma_P}}\right]^{\frac{\gamma_A+\gamma_P}{\gamma_A}} }  \times \E\left[U_P\left(X^{a^*} - W^*\right)\right]. 
 \end{align}
 
\end{prop}
\begin{proof}
Using similar reasoning to that in the proof of Proposition \ref{prop:effa}, we have that :
\begin{align}
\frac{\E\left[U_P\left(X^{a} - W\right)\right] }{ \E\left[U_P\left(X^{a^*} - W^*\right)\right]} &\geq  \frac{\E\left[\left|U_P\left(X^a -  \hat{\kappa}(a) \right)\right|^{\frac{\gamma_A}{\gamma_A + \gamma_P}}\right]^{\frac{\gamma_A+\gamma_P}{\gamma_A}}}{\E\left[\left|U_P\left(X^{a^*} -  \kappa(a^*) \right)\right|^{\frac{\gamma_A}{\gamma_A + \gamma_P}}\right]^{\frac{\gamma_A+\gamma_P}{\gamma_A}}} \times \left(  \frac{\E\left[ U_A\left(W-  \hat{\kappa}(a) \right)\right]}{ U_A(y_0)} \right)^{-\frac{\gamma_P}{\gamma_A}},
 \end{align}

and in particular if $(W,a)$ binds the Participation Constraint then 
$$ \left(  \frac{\E\left[ U_A\left(W-  \hat{\kappa}(a) \right)\right]}{ U_A(y_0)} \right)^{-\frac{\gamma_P}{\gamma_A}}=1,$$
and using the Reverse-Holder equality condition, we have equality for wages of the form $\frac{\gamma_P}{\gamma_A+\gamma_P} X^a + \beta$ saturating the Participation Constraint. \\
\end{proof}

This Proposition underlines the effect of a change in action cost function on the Risk-Sharing optimum, encompassing it in the action cost ratio. A corollary to this Proposition extends this result to the situation where a Principal has a choice between two Agents with respective cost functions $\kappa$ and $\hat{\kappa}$, and respective risk aversion coefficients $\gamma_A$ and $\hat{\gamma_A}$, and wishes to choose one and provide him with the related optimal Risk-Sharing contract.

\begin{cor}[A comparison of two Agents] Consider a Principal who has a choice between two agents characterized by $(\gamma_A, \kappa, y_0)$ and $(\hat{\gamma_A}, \hat{\kappa}, \hat{y}_0)$. Suppose that $\gamma_A y_0 = \hat{\gamma_A} \hat{y}_0$. Let $(W,a)$ and $(\hat{W}, \hat{a})$ be the associated optimal contracts (recalling that $a = \kappa^*(1)$ and $\hat{a} = \hat{\kappa}^*(1)$). Then :

$$ \E\left[U_P\left(X^{a} - W\right)\right] =   \frac{\E\left[\left|U_P\left(X^a -  \kappa(a) \right)\right|^{\frac{\gamma_A}{\gamma_A + \gamma_P}}\right]^{\frac{\gamma_A+\gamma_P}{\gamma_A}}}{\E\left[\left|U_P\left(X^{\hat{a}} -  \hat{\kappa}(\hat{a}) \right)\right|^{\frac{\hat{\gamma_A}}{\hat{\gamma_A} + \gamma_P}}\right]^{\frac{\hat{\gamma}_A+\gamma_P}{\hat{\gamma}_A}}} \times \E\left[U_P\left(X^{\hat{a}} - \hat{W}\right)\right]. $$
Therefore if  $$ \frac{\E\left[\left|U_P\left(X^a -  \kappa(a) \right)\right|^{\frac{\gamma_A}{\gamma_A + \gamma_P}}\right]^{\frac{\gamma_A+\gamma_P}{\gamma_A}}}{\E\left[\left|U_P\left(X^{\hat{a}} -  \hat{\kappa}(\hat{a}) \right)\right|^{\frac{\hat{\gamma_A}}{\hat{\gamma_A} + \gamma_P}}\right]^{\frac{\hat{\gamma}_A+\gamma_P}{\hat{\gamma}_A}}} < 1,$$
the Principal should choose the Agent characterized by $(\gamma_A, \kappa)$ and vice-versa.
\end{cor}

The choice of Agent therefore depends on a balance between the Agent's risk aversion and his action cost function. Note that this result may be generalized beyond the case where $\gamma_A y_0 = \hat{\gamma_A} \hat{y}_0$, one simply has to include the effect of $y_0$ and $\hat{y}_0$ in the comparison criteria. 

\begin{example}
Consider $B \sim \mathcal{N}(0,1),$ $\kappa(x) := \frac{x^2}{2}$ and $\hat{\kappa}(x) := x^2$. Then $a=1$ and $\hat{a} = \frac{1}{2}$. The Principal's expected utility obtained with the first Agent would be : 
\begin{align*}
\E[|U_P(X^a - \kappa(a))|^\frac{\gamma_A}{\gamma_A+\gamma_P}]^{\frac{\gamma_A+\gamma_P}{\gamma_A}} &= \exp\left(-\gamma_P(x_0 + a - \kappa(a)) + \frac{\gamma_A \gamma_P^2}{\gamma_A + \gamma_P}\right) \\
&= \exp\left(-\gamma_P\left(x_0 + \frac{1}{2} - \frac{\gamma_A \gamma_P}{\gamma_A + \gamma_P} \right)\right). \\
\end{align*}

Similarly, his expected utility from the second Agent would be: 

\begin{align*}
\E[|U_P(X^{\hat{a}} - \kappa(\hat{a}))|^\frac{\hat{\gamma}_A}{\hat{\gamma}_A+\gamma_P}]^{\frac{\hat{\gamma}_A+\gamma_P}{\hat{\gamma}_A}} &= \exp\left(-\gamma_P\left(x_0 + \frac{1}{4} - \frac{\hat{\gamma}_A \gamma_P^2}{\hat{\gamma}_A + \gamma_P}\right)\right). \\
\end{align*}

Therefore in order to maximize his utility, the Principal should employ the first Agent if $$ \frac{1}{4} - \frac{\gamma_A \gamma_P^2}{\gamma_A + \gamma_P} >  -\frac{\hat{\gamma}_A \gamma_P^2}{\hat{\gamma}_A + \gamma_P},$$
and he should employ the second Agent if not. Note that for more general cost functions and a Gaussian $B$, the Principal should employ the first Agent if 
$$ a - \kappa(a) - \frac{\gamma_A \gamma_P^2}{\gamma_A + \gamma_P} > \hat{a} - \hat{\kappa}(\hat{a})-\frac{\hat{\gamma}_A \gamma_P^2}{\hat{\gamma}_A + \gamma_P}. $$
\end{example}

\begin{remark}
In the same vein as the Propositions above, one may also exploit the consequences to Reverse-Hölder decomposition of the Principal's utility to compare the optimal Risk-Sharing contract to the optimal Moral Hazard contract, under CARA utilities. 
\end{remark}

\section{Extensions to the Reverse-Hölder framework}
\label{section:exten}

In the following Section, we illustrate the versatility of our framework and our results. Indeed so far we have worked with a single period model with a general production process and a general action cost function. We obtain existence, uniqueness and characterization of the Risk-Sharing solution, and some analysis on the model. In the following, we discuss on some extensions to this setting.  One first important setting is the continuous time setting.

\subsection{Existence, uniqueness and characterization of the Risk-Sharing optimum in a continuous time setting}

%

We specify the model of interest which is simply a continuous time version of the one studied previously. More precisely we consider one Principal and one Agent. The Principal provides a single cash flow (wage) $W$ at maturity (denoted $T$) to the Agent and requires in exchange an action $a=(a_t)_{t\in [0,T]}$ (that is completely monitored by the Principal) continuously in time according to the random fluctuations of the wealth of the firm. A contract will once again be a pair (wage,action)$=(W,a)$. \\

We start with the probabilistic structure that is required to define the random fluctuations of the wealth of the Principal. 
Let  $(\Omega,\mathcal{F},\P)$ be a probability space on which a stochastic process $B:=(B_{t})_{{t \in [0,T]}}$ is defined with its natural and completed filtration $\mathbb{F}:=(\mathcal{F}_t)_{t\in [0,T]}$. The only requirement on this stochastic process is that 
$$ \sup_{t\in [0,T]} \E\left[\exp\left(q  B_t\right)\right]<+\infty, \; \forall q \in \mathbb{R}^*.$$
This allows for the use of CARA utilities, and is verified for example for Brownian motion. We denote by $\E[\cdot]$ the expectation with respect to the probability measure $\P$.\\\\
\noindent
The Agent will be asked to perform an action $a$ continuously in time, according to the performances of the firm. Hence we introduce the set $\mathcal{P}$ of $\mathbb{F}$-predictable stochastic processes $a=(a_t)_{t\in [0,T]}$ and the set of actions is given as : 
$$\mathbb{H}_{2}:=\left\{ a=(a_{t})_{t\in[0,T]} \in \mathcal{P}, \; \textrm{ s.t. } \E\left[\exp\left(q\int_{0}^{T} |a_{t}|^{2} dt\right)\right]<+\infty, \; \forall q>0 \right\}.$$
As we will work with exponential preferences for the Agent and the Principal, we require so-called "exponential moments" on the actions and wages. As mentioned previously, this is a technical assumption.  
Given $a$ in $\mathbb{H}_{2}$, the wealth of the principal at any intermediate time $t$ between $0$ and the maturity $T$ is given by : 
\begin{equation}\label{eq:dynX}
X_t^{a} = x_0 + \int_{0}^{t} a_{s} ds + B_t, \quad t\in [0,T],\; \P-a.s.,
\end{equation}
where $x_0\in\R$ is a fixed real number. For any action $a$ in $\mathbb{H}_{2}$, we set $\mathbb{F}^{X^a}:=(\mathcal{F}_t^{X^a})_{t\in [0,T]}$ the natural filtration generated by $X^a$. In particular, we are interested in the set of $\mathcal{F}^{X^a}_T$-measurable random variables which provides the natural set for the wage $W$ paid by the Principal to the Agent. More precisely, we set\footnote{$\mathbb{R}^*:=\mathbb{R}\setminus\{0\}$} 
$$\mathcal{W}:=\left\{\mathcal{F}^{X^a}_{T}-\textrm{measurable random variables } W, \; \E[\exp(q \, W)]<+\infty, \; \forall q \in \mathbb{R}^* \right\}.$$
The fact that we ask for so-called finite exponential moments of any (positive, respectively negative) order for the action (respectively for the wage) is purely technical. As we will see, the optimal contract will satisfy these technical assumptions.\\\\
\noindent
The cost of effort for the Agent is once again modeled by a convex, continuous and non-decreasing function $\kappa$ defined on $\R_+$. As explained in the introduction for the single period problem, the Agent will accept a contract $(W,a)$ in $\mathcal{C}$ if and only if the following participation constraint (PC) is satisfied :
\begin{equation}
\label{eq:PC}
\E\left[U_{A}\left(W-\int_{0}^{T} \kappa(a_{t}) dt\right)\right] \geq U_{A}(y_0),
\end{equation}
where : $y_0$ is a given real number, $\kappa : \mathbb{R^+} \to \mathbb{R}$ models the cost of effort for the agent and is as discussed above, and $U_{A}(x):=-\exp(-\gamma_{A})(x)$ with $\gamma_{A}>0$ the risk aversion parameter for the Agent. From now on we assume that parameters $(y,\gamma_{A})$ are fixed. 
With these notations at hand, we can state the Principal's problem which writes down in term of a classical first-best problem as follows: 
\begin{equation}
\label{eq:OptiPb}
\sup_{(W,a) \in C^{adm}} \E\left[U_{P}\left(X_{T}^{a}-W\right)\right],
\end{equation}
where $U_{P}(x):=-\exp(-\gamma_{P}x)$ with $\gamma_{P}>0$ fixed and where 
$$ C^{adm}:=\left\{(W,a) \in \mathcal{C} \times \mathbb{H}_2, \; (\ref{eq:PC}) \textrm{ is in force}\right\}$$
is the set of admissible contracts satisfying the participation constraint (\ref{eq:PC}). We note that this continuous time problem may be dealt with using tools from the field of optimal stochastic control. Indeed, one may exploit  Equation (\ref{eq:PC}) to obtain a parametrization of all the wages satisfying the Participation Constraint for a given action process $(a_t)_{t \in [0,T]}$. This allows us to rewrite Problem (\ref{eq:OptiPb}) as a standard optimal control problem and one way of then proving existence involves using verification results as long as the required hypotheses are verified. \\ 

In the following we provide the continuous time counterpart to Theorem \ref{theo:main2}. It provides existence, uniqueness and characterization under quite general hypotheses (notably for a general process $(B_t)_{t \in [0,T]}$). We believe that this theorem brings to light the structure of the underlying problem and thus complements possible existence and uniqueness results exploiting optimal control.  

\begin{theorem}[Existence, uniqueness and characterization]
\label{theo:main2cont}
Consider the contract $(W^*, a^*)$ defined by setting :  $$a^*_t = \kappa^*(1)$$ for any $t$ in $[0,T]$ and 
 $$W^* = \frac{\gamma_P}{\gamma_P + \gamma_A} X^{a^*}_T + \beta^*,$$ 
 where the constant $\beta^*$ is worth :
$$ \beta^*:= y_0+ T\kappa(\kappa^*(1)) - \frac{\gamma_P}{\gamma_A+\gamma_P} \left(x_0+ T \kappa^*(1) \right) - \frac{1}{\gamma_A}\ln\left( \mathbb{E}\left[\exp\left(\frac{-\gamma_P \gamma_A}{\gamma_P+\gamma_A} \int_0^T dB_t\right)\right] \right).$$

Then 
$(W^*, a^*)$ both satisfies and saturates the participation constraint. Furthermore for any $(W,a)$ in $C^{adm}$
$$\E\left[U_P\left(X_T^{a} - W\right)\right] \leq \E\left[U_P\left(X_T^{a^*} - W^*\right)\right],$$ 
with equality only for $(W^*, a^*)$. $(W^*, a^*)$ is therefore the unique optimal contract in the continuous-time Risk-Sharing problem. Finally, the Principal's optimal utility is :
 \begin{align*}
& \E\left[U_P\left(X_T^{a^*} - W^*\right)\right] =  U_P(x_0-y_0 - \gamma_P T \widetilde{\kappa}(1) ) \E\left[  \exp\left(\frac{-\gamma_P \gamma_A}{\gamma_P+\gamma_A} \int_0^T dB_t \right)\right]^{\frac{\gamma_A + \gamma_P}{\gamma_A}}.
\end{align*}

\end{theorem}
\begin{proof}
The proof closely mirrors that of Theorem \ref{theo:main2} with the calculations done in continuous time.
\end{proof}
It is important to note that, as a consequence to this Theorem, the analysis conducted in Section \ref{section:an} nicely generalizes to this setting and allows once again for the comparison of contracts.  In this continuous time setting one may even easily analyze the effect of choosing a non-constant action $a$. 
\begin{remark} 
In the widely studied case where $B$ is in fact a standard Brownian motion, Theorem \ref{theo:main2cont} applies and gives existence and uniqueness. The optimal $\beta^*$ has the expression :
$$ \beta^*:= \frac{T \gamma_A |\gamma_P|^2}{2|\gamma_A+\gamma_P|^2} + y_0+ T\kappa(\kappa^*(1)) - \frac{\gamma_P}{\gamma_A+\gamma_P} \left(x_0+ T \kappa^*(1) \right),$$
and the Principal's expected utility is :
$$ \E\left[U_P\left(X_T^{a^*} - W^*\right)\right] =  U_P(x_0-y_0) \exp\left(\gamma_P T \left(-\widetilde{\kappa}(1) + \frac{\gamma_P \gamma_A}{2(\gamma_A+\gamma_P)} \right) \right).$$
We therefore complement the characterization work of Muller in \cite{Muller98} by providing an existence and uniqueness proof, as well as providing an alternative characterization method.
\end{remark}

\begin{remark}
The Principal-Agent problem has also been analyzed in the setting where a Principal employs several different Agents. Note that for CARA Agents this framework also works. 
\end{remark}

We now turn to the case of a risk neutral Principal which can be analyzed as a limit of the case of a CARA Principal, and we do so in the following.

\subsection{The case of a risk-neutral Principal}
\label{sectionRN}

The analysis provided throughout this paper concerns a Principal and an Agent who are both risk averse with the risk aversion modeled through the CARA utility functions. In fact, the key to this paper is the exponential properties of the CARA utilities. However, an important case in the literature is that of a risk neutral Principal who wishes to employ a risk averse Agent. More precisely, if we for example set ourselves in the discrete-time setting, the Risk-Sharing problem (\ref{eq:ProbRS}) becomes : 

\begin{equation}
\label{eq:OptiPbBis}
\sup_{(W,a) \in \mathcal{C}^{adm}} \E\left[X^{a}-W\right],
\end{equation}  
where we use the same notations as previously (in particular the Participation Constraint is in force for the Agent with utility function $U_A(x)=-\exp(-\gamma_A x)$). Since our Reverse Hölder approach relies on the structure of functions $U_P$ and $U_A$, we cannot carry it directly in the risk neutral case. However, as it is well-known, the risk neutral framework can be seen as a limit case with formally $\gamma_P=0$ by rescaling the mapping $U_P$ to become $\tilde U_P(x):=-\frac{\exp(-\gamma_P x) -1}{\gamma_P}$ and by letting $\gamma_P$ go to $0$. Hence, we can use our approach with $\tilde U_P$ and $U_A$ to obtain existence of an optimum and its characterization in the risk-neutral case. \\

\noindent
Consider a contract $(W,a)$ that satisfies the (PC). Then by Lemma \ref{Lemma:UI} (in Section \ref{section:technic}), 
\begin{align*}
\E\left[X^{a}-W\right] &= \E \left[ \lim_{\gamma_P \rightarrow 0 }  \tilde{U}_P(X^a-W) \right]\\
&= \lim_{\gamma_P \rightarrow 0}  \E \left[   \tilde{U}_P(X^a-W) \right]\\
&= \lim_{\gamma_P \rightarrow 0} \gamma_P^{-1} \left( \E[U_P (X^a - W)] +1\right)\\
&\leq \lim_{\gamma_P \rightarrow 0} \gamma_P^{-1} \left( \E[U_P (X^{a^*} - W^*)] +1\right),
\end{align*}
according to (ii) of Theorem \ref{theo:main2}. Using the explicit computation of the upper bound's value, we have that
\begin{align*}
\E\left[X^{a}-W\right] &= \E \left[ \lim_{\gamma_P \rightarrow 0 }  \tilde{U}_P(X^a-W) \right]\\
&\leq \lim_{\gamma_P \rightarrow 0} \gamma_P^{-1} \left(  U_P(x_0-y_0 + \widetilde{\kappa}(1))  \E\left[  \exp\left(\frac{-\gamma_P \gamma_A}{\gamma_P+\gamma_A} B \right)\right]^{\frac{\gamma_A + \gamma_P}{\gamma_A}}+1\right)\\
&= x_0 - y_0 + \widetilde{\kappa}(1).
\end{align*}
So we have given the upper bound $x_0 - y_0 + \widetilde{\kappa}(1)$ to the value problem of the Risk Neutral Principal. An explicit computation gives that this upper bound can be attained by choosing the contract $(W^*_{RN}, a^*_{RN})$ with
 $$a^*_{RN} = \kappa^*(1) \text{ and } W^*_{RN}=y_0 + \kappa(\kappa^*(1)),$$
which is formally the optimal contract found in in Theorem \ref{theo:main2} with $\gamma_P=0$. The optimal parameters have economic meaning : the Principal is neutral to risk and is thus willing to give a fixed wage to his Agent regardless of the performance of the output process. We note that in this case, the Risk-Sharing structure of the problem disappears and the Principal carries all of the risk. 
\\\\
\noindent 
We thus have an existence proof for the risk-neutral case, along with a possible characterization of the optimum. Of course we may perform the computation of the risk-neutral optimum more directly without exploiting the risk-averse optimum provided through the method used in this paper. However, the fact that the risk-neutral case may be seen as a limit of the risk-averse case allows for example for the extension of the results of Section \ref{section:an} to neutral settings. We give such an extension in the following. 

\begin{prop}
\label{prop:RNana}
Let $(W^*, a^*)$ be the risk-averse optimum for $\hat{U}_P(x) = -\exp(-\hat{\gamma}_P x)$ given in Theorem \ref{theo:main2} and let $(W,a)$ be in $C^{adm}$. Then : 
$$\frac{\E\left[X^{a} - W\right] }{ \E\left[\hat{U}_P\left(X^{a^*} - W^*\right)\right]} \geq \lim_{\gamma_P \rightarrow 0} \gamma_P^{-1}  \left( \frac{ \E\left[\left|U_P\left(X^a -  \kappa(a) \right)\right|^{\frac{\gamma_A}{\gamma_A + \gamma_P}}\right]^{\frac{\gamma_A+\gamma_P}{\gamma_A}} }{\E\left[\left|\hat{U}_P\left(X^{a^*} -  \kappa(a^*) \right)\right|^{\frac{\gamma_A}{\gamma_A + \hat{\gamma}_P}}\right]^{\frac{\gamma_A+ \hat{\gamma}_P}{\gamma_A}} } \times \frac{U_P(y_0)}{\hat{U}_P(y_0)} \right).$$
\end{prop}

\begin{proof}
See Section \ref{section:proofRNana}.
\end{proof}

This Proposition is an asymptotic form of Proposition \ref{prop:effa} for comparison between a risk-neutral and risk-averse Principal : we obtain a decomposition dependent on  an action ratio and a participation constraint ratio. Note however that the risk-neutral utility is not signed and this inequality is thus more difficult to exploit. Finally, note that this analysis also holds in continuous time.

\section{Proofs}
\label{section:proofs}
 
 In this Section we collect the proofs of the technical results we made use of to proceed with our analysis. 
 
 \subsection{Proof of Theorem \ref{th:main1}}
 \label{section:proofdecomp}

\begin{proof}
We fix $a$ in $\R_+$ and prove each item of the Theorem. 
\begin{itemize}
\item[(i)] For the first result, we express the (expected) utility of the Principal in terms of the one of the Agent. We have : 
\begin{align}
\label{eq:temp1}
&\E\left[U_P\left(X^{a} - W\right)\right] \nonumber\\
&=\E\left[U_P\left(X^a -   \kappa(a)\right) \times \exp\left( \gamma_P \left(W- \kappa(a) \right) \right)\right]\nonumber\\
&=\E\left[U_P\left(X^a -   \kappa(a) \right) \times \left|U_A\left(W-\kappa(a) \right)\right|^{-\frac{\gamma_P}{\gamma_A}}\right].
\end{align}
We wish to extract the Agent's utility from this expression and obtain at least an inequality. To do so, we need some kind of H\"older inequality. However the classical H\"older inequality cannot be applied for two reasons : first the exponent $-\frac{\gamma_P}{\gamma_A}$ of the utility of the Agent is negative; and then the negativity of the mapping $U_P$ calls for the use of a H\"older inequality in the reverse direction. These two features are taken into account in the so-called \textit{ Reverse H\"older inequality } which can be seen as a counterpart to the classical H\"older inequality and given in Proposition  \ref{prop:RH}. In particular, we wish to use Item (i). More precisely, let : 
\begin{equation}
\label{eq:FG}
F := U_P\left(X^a - \kappa(a) \right), \quad G := \left|U_A\left(W-  \kappa(a) \right)\right|^{-\frac{\gamma_P}{\gamma_A}}.
\end{equation}
Note naturally that these two random variables depend on the contract $(W, a)$ under interest.\\\\
\noindent 
We wish to apply Reverse-H\"older to $F$ and $G$ with some exponent $p$ that we calibrate so that  $|G|^{-\frac{1}{p-1}} = \left|U_A\left(W-  \kappa(a) \right)\right|$; which immediately gives  $p = 1 + \frac{\gamma_P}{\gamma_A} = \frac{\gamma_A + \gamma_P}{\gamma_A} > 1$.
We thus immediately obtain:
$$ \E\left[|F|^{\frac1p}\right] = \E\left[\left|U_P\left(X^a -  \kappa(a) \right)\right|^{\frac{\gamma_A}{\gamma_P + \gamma_A}}\right],$$
$$\E\left[|G|^{\frac{-1}{p-1}} \right] = -\E\left[ U_A\left(W -  \kappa(a) dt \right)\right].$$ 
Applying (i) of Proposition \ref{prop:RH} to $F$ and $G$ with this particular choice of $p$ in Relation (\ref{eq:temp1}) gives our result : 
\begin{align}
\label{eq:temp2}
&\E\left[U_P\left(X_T^{a} - W\right)\right] \nonumber\\
&=-\E\left[\left|F G\right|\right] \nonumber\\
&\leq \E\left[\left|U_P\left(X^a -\kappa(a)\right)\right|^{\frac{\gamma_A}{\gamma_A + \gamma_P}}\right]^{\frac{\gamma_A+\gamma_P}{\gamma_A}} \times \E\left[ U_A\left(W- \kappa(a) \right)\right]^{-\frac{\gamma_P}{\gamma_A}}.
\end{align} 


\item[(ii)]  We first prove that (ii\textquotesingle) is equivalent to (ii\textquotesingle\textquotesingle). This involves finding an equality condition for (\ref{eq:temp2}). Through (ii) of Proposition \ref{prop:RH}, Inequality (\ref{eq:temp2}) is an equality if and only the contract $(a,W)$ is such that there exists a positive constant $\alpha$ such that the random variables $F$ and $G$ defined in (\ref{eq:FG}) enjoys :
$$ |F| = \alpha |G|^{-\frac{p}{p-1}}.$$
By definition of $F$, $G$ and $p=\frac{\gamma_A+\gamma_P}{\gamma_A}$ this condition reads as : 
\begin{equation*}
\frac{\left|U_P\left(X^a - \kappa(a) \right) \right|}{\left| U_A\left(W- \kappa(a) dt \right) \right|^\frac{-( \gamma_P + \gamma_A)}{\gamma_A}} = \alpha 
\end{equation*}
Thus using the exponential form of our utilities we obtain the condition : 
\begin{align*}
\frac{\left|U_P\left(X_T^a - W \right) \right|}{\left| U_A\left(W-\kappa(a) \right) \right|} = \alpha, \\
\end{align*}
which is equivalent to 
\begin{align*}
\frac{U_P'\left(X_T^a - W\right)}{U_A'\left(W -  \int_0^T \kappa(a_t) dt \right)} = \frac{ \gamma_P}{\gamma_A} \alpha. 
\end{align*}
Setting $\alpha$ to $ \frac{ \gamma_P}{\gamma_A} \alpha$ we obtain our result. \\

We now prove that (ii\textquotesingle\textquotesingle) is equivalent to (ii\textquotesingle\textquotesingle\textquotesingle). When (ii\textquotesingle\textquotesingle) holds, $(W, a)$ satisfies (\ref{eq:PC}) and we have the following series of implications where $\alpha$ is a positive constant :  
\begin{align*}
& \quad \quad  \frac{U_P'\left(X^a -W\right)}{U_A'\left(W-  \kappa(a)\right)}=\alpha \\
&\Rightarrow (\gamma_P + \gamma_A)W - \gamma_P X^a - \gamma_A \kappa(a)   = \ln\left( \alpha\frac{\gamma_A}{\gamma_P}\right)\\
&\Rightarrow W = \frac{\gamma_P}{\gamma_P + \gamma_A} X^a + \frac{\gamma_A}{\gamma_P + \gamma_A} \kappa(a) +  \ln\left( \alpha\frac{\gamma_A}{\gamma_P}\right),\\
&\Rightarrow W = \frac{\gamma_P}{\gamma_P + \gamma_A} X^a + \beta,
\end{align*}
where $\beta =\frac{\gamma_A}{\gamma_P + \gamma_A}  \kappa(a)  +  \ln\left( \alpha\frac{\gamma_A}{\gamma_P}\right)$.  
%

Conversely, let us suppose that (ii\textquotesingle\textquotesingle\textquotesingle) holds. Then $(W, a)$ where $W = \frac{\gamma_P}{\gamma_P + \gamma_A} X^a + \beta$ satisfies (\ref{eq:PC}) and we have that : 
$$ \frac{U_P'\left(X^a -W\right)}{U_A'\left(W-  \kappa(a) \right)} = \exp\left( (\gamma_P + \gamma_A) \beta + \gamma_A  \kappa(a)  \right) \in \R^*_+.$$

\end{itemize}
 \end{proof}

\subsection{Proof of Proposition \ref{prop:upperbound}}
\label{section:proofPropupperBound}

Let $a \in \R_+$. We have
\begin{align*}
&\E\left[  \left|U_P\left(X^{a} - \kappa(a)  \right)\right|^{\frac{\gamma_A}{\gamma_A + \gamma_P}}\right]\\
&=\E\left[  \exp\left(-\gamma_P\left(X^{a} - \kappa(a)  \right)\right)^{\frac{\gamma_A}{\gamma_A + \gamma_P}}\right]\\
&=\E\left[  \exp\left(-\gamma_P\left(x_0 + \left(a-\kappa(a) \right) + B\right)\right)^{\frac{\gamma_A}{\gamma_A + \gamma_P}}\right]\\
&= \exp\left(-\frac{\gamma_P \gamma_A}{\gamma_A+\gamma_P} x_0 + \Phi(a) \right)  \E\left[  \exp\left(\frac{-\gamma_P \gamma_A}{\gamma_P+\gamma_A} B \right)\right]\\
\end{align*}
where 
$$ c\mapsto \Phi(c):=  -\frac{\gamma_P \gamma_A}{\gamma_A+\gamma_P} \left(c-\kappa(c)\right) .$$
Note that the mapping $\Phi$ is convex on $\mathbb{R}_+$, and letting $a^*:=\kappa^*(1)$, 
$$ \Phi(c) \geq \Phi(a^*) = -\frac{\gamma_P \gamma_A \widetilde{\kappa}(1)}{\gamma_A+\gamma_P}, \quad \forall c\geq0.$$
So, 
\begin{align*}
&\E\left[  \left|U_P\left(X^{a} -  \kappa(a) \right)\right|^{\frac{\gamma_A}{\gamma_P + \gamma_A}}\right]\\
&\geq \exp\left(-\frac{\gamma_P \gamma_A}{\gamma_A+\gamma_P} (x_0 + \tilde{\kappa}(1))\right)  \E\left[  \exp\left(\frac{-\gamma_P \gamma_A}{\gamma_P+\gamma_A} B \right)\right],
\end{align*}
and we thus deduce our result.

\subsection{Proof of the optimal contract : Theorem \ref{theo:main2}}
\label{section:ProofOpt}

We consider the contract $(W^*, a^*)$ defined by setting :  $$a^* := \kappa^*(1)$$ and 
 $$W^* = \frac{\gamma_P}{\gamma_P + \gamma_A} X^{a^*} + \beta^*,$$ 
$$ \beta^*:=  y_0+ \kappa(\kappa^*(1)) - \frac{\gamma_P}{\gamma_A+\gamma_P} \left(x_0+  \kappa^*(1) \right) - \frac{1}{\gamma_A}\ln\left( \mathbb{E}\left[\exp\left(\frac{-\gamma_P \gamma_A}{\gamma_P+\gamma_A} B\right)\right] \right).$$
We first study the participation constraint to verify the admissibility of such a contract. We have that : 
\begin{align*}
& \E\left[U_A\left( W^* -  \kappa(a^*) \right)\right] = \E\left[U_A\left(  \frac{\gamma_P}{\gamma_P + \gamma_A} X^{a^*}+ \beta^* +  \kappa(\kappa^*(1)) \right)\right] \\
&= \E\left[U_A\left(  \frac{\gamma_P}{\gamma_P + \gamma_A} (x_0 + \kappa^*(1) + B ) + \beta^* - \kappa(\kappa^*(1)) \right)\right]\\
&= U_A\left(  y_0\right)\E\left[\exp \left(  \frac{-\gamma_A \gamma_P}{\gamma_P + \gamma_A} B - \ln\left( \mathbb{E}\left[\exp\left(\frac{-\gamma_P \gamma_A}{\gamma_P+\gamma_A} B\right)\right]\right)
 \right)\right] = U_A(y_0). 
\end{align*}
Thus $W^*$ belongs to $\mathcal{W}(a^*)$. According to Item (ii) of Theorem \ref{th:main1}, the contract satisfies the Borch Rule. Furthermore, it is of the form $(W^*, \kappa^*(1))$ where $W^*$ saturates the Participation Constraint. It follows that the equality conditions to reach the upper bound of the Principal's Expected Utility are verifies and we have that for any $a$ in $\R_+$ and any $W$ in $\mathcal{W}^*(a)$, 

 \begin{align*}
 &\E\left[U_P\left(X^{a} - W\right)\right] \\
& \leq \E\left[U_P\left(X^{a^*} - W^*\right)\right] =  U_P(x_0-y_0) \exp\left(\gamma_P T \left(-\widetilde{\kappa}(1) + \frac{\gamma_P \gamma_A}{2(\gamma_A+\gamma_P)} \right) \right).
\end{align*}

We deduce that $(W^*, a^*)$ is the optimal contract for the first-best Principal-Agent problem.

\subsection{A technical lemma}
\label{section:technic}
\begin{lemma}
\label{Lemma:UI}
Let $(W,a)$ be an admissible contract in $\mathcal{C}$. The sequence of random variables \\
$\left(\tilde U_P(X^a-W)\right)_{0<\gamma_P<1}$ is uniformly integrable. And so : 
$$ \E\left[X^a-W \right]=\E\left[\lim_{\gamma_P \to 0} \tilde U_P(X^a-W) \right] = \lim_{\gamma_P \to 0} \E\left[ \tilde U_P(X^a-W) \right].$$
\end{lemma}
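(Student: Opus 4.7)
The plan is to prove uniform integrability by exhibiting an explicit $\gamma_P$-independent integrable dominating random variable, and then to conclude the interchange of limit and expectation by dominated convergence (which also yields UI for free).

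First I would establish the pointwise convergence $\tilde U_P(x) \to x$ as $\gamma_P \to 0$ and, more importantly, a uniform bound on $0<\gamma_P<1$. The mean value theorem applied to $y\mapsto e^{-y}$ between $0$ and $-\gamma_P x$ produces some $\xi$ with $|\xi|\leq \gamma_P|x|$ such that
$$\tilde U_P(x)=\frac{1-e^{-\gamma_P x}}{\gamma_P}=x\, e^{\xi},$$
so that, uniformly in $0<\gamma_P<1$,
$$|\tilde U_P(x)|\leq |x|\, e^{\gamma_P|x|}\leq |x|\, e^{|x|}.$$
Specialising to $x=X_T^a-W$ yields the candidate dominant $Y:=|X_T^a-W|\, e^{|X_T^a-W|}$.

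The main step is then to verify that $Y\in L^1(\P)$. Here I would exploit the admissibility conditions: $W\in\mathcal W$ gives $\E[\exp(qW)]<\infty$ for every $q\in\R^\ast$; $a\in\H_2$ together with the Cauchy--Schwarz inequality $\bigl|\int_0^T a_s\, ds\bigr|\leq \sqrt T\bigl(\int_0^T a_s^2\, ds\bigr)^{1/2}$ (and the elementary bound $q\sqrt T\sqrt y\leq \eta y+q^2 T/(4\eta)$) gives finite exponential moments of any order for $\int_0^T a_s\, ds$; and $B_T$ is Gaussian. Therefore $X_T^a-W$ has finite exponential moments of every order, positive and negative. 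Cauchy--Schwarz then yields
$$\E[Y]\leq \bigl(\E[|X_T^a-W|^2]\bigr)^{1/2}\bigl(\E[e^{2|X_T^a-W|}]\bigr)^{1/2}<+\infty,$$
using $e^{2|Z|}\leq e^{2Z}+e^{-2Z}$ and one further Cauchy--Schwarz to split the mixed exponentials of $X_T^a$ and $W$.

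Once domination by the integrable $Y$ is secured, the family $\bigl(\tilde U_P(X_T^a-W)\bigr)_{0<\gamma_P<1}$ is automatically uniformly integrable, and Lebesgue's dominated convergence theorem gives
$$\lim_{\gamma_P\to 0}\E[\tilde U_P(X_T^a-W)]=\E\Bigl[\lim_{\gamma_P\to 0}\tilde U_P(X_T^a-W)\Bigr]=\E[X_T^a-W],$$
which is exactly the claimed identity. The only delicate point is the integrability of $Y$; everything else is bookkeeping around the mean value bound and the exponential moments built into the definitions of $\mathcal W$ and $\H_2$.
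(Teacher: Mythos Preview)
Your proof is correct and follows essentially the same route as the paper: both arguments use the mean value theorem to control $\tilde U_P(X_T^a-W)$ by $|X_T^a-W|$ times an exponential of $|X_T^a-W|$, and then invoke the exponential-moment conditions built into $\mathcal W$ and $\mathbb H_2$ (together with Cauchy--Schwarz) to get the required integrability. The only cosmetic difference is that the paper establishes uniform integrability via the de~la~Vall\'ee--Poussin criterion (a uniform $L^2$ bound) whereas you produce an explicit $\gamma_P$-free integrable dominant and apply dominated convergence directly; your version is arguably a bit cleaner.
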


\begin{proof}
The second part of the statement is a consequence of the uniform integrability (UI) and of the fact that the identity mapping is the limit (as $\gamma_P$ goes to $0$) of $\tilde U_P$. So we focus on the UI property and apply de la Vall\'ee-Poussin criterion. We have :
\begin{align*}
&\sup_{0<\gamma_P<1} \E\left[|\tilde U_P(X^a-W)|^2 \right]\\
&= \gamma_P^{-2} \sup_{0<\gamma_P<1} \E\left[|\exp(-\gamma_P(X^a-W))-1|^2\right]\\
&= \sup_{0<\gamma_P<1} \E\left[|\bar X|^2 |\exp(-\gamma_P \bar X)|^2\right],
\end{align*} 
where $\bar X$ is a random point between $0$ and $X^a-W$ (using mean value theorem). By Cauchy-Schwarz's inequality we have, 
\begin{align*}
&\sup_{0<\gamma_P<1} \E\left[|\tilde U_P(X^a-W)|^2 \right]\\
&\leq \E\left[|\bar X|^4\right]^{1/2} \sup_{0<\gamma_P<1} \E\left[\exp(-4 \gamma_P \bar X)\right]^{1/2}.
\end{align*} 
As $|\bar X| \leq |X^a-W|$, $\P$-a.s., we have that $\E\left[|\bar X|^4\right]<+\infty$. Regarding the second term, 
\begin{align*}
&\sup_{0<\gamma_P<1} \E\left[\exp(-4 \gamma_P \bar X)\right]\\
&\leq \sup_{0<\gamma_P<1} \left( \P\left[\bar X \geq 0\right] + \E\left[\exp(-4 \gamma_P \bar X) \textbf{1}_{\bar X <0}\right] \right)\\
&\leq 1 + \E\left[\exp(-4 R \bar X) \textbf{1}_{\bar X <0}\right] \\
&\leq 1 + \E\left[\exp(4 R |X^a-W|) \right] <+\infty.
\end{align*}
\end{proof}

\subsection{Proof of Proposition \ref{prop:RNana}}
\label{section:proofRNana}

Let $(W^*, a^*)$ be the risk-averse optimum for $\hat{U}_P(x) = -\exp(-\hat{\gamma}_P x)$ given in Theorem \ref{theo:main2} and let $(W,a)$ be in $C^{adm}$. Then as the expected utility function is negative, and using the Reverse-Holder inequality we obtain : 
\begin{align*}
\frac{\E\left[X^{a} - W\right] }{ \E\left[\hat{U}_P\left(X^{a^*} - W^*\right)\right]}&= \frac{\lim_{\gamma_P \rightarrow 0} \gamma_P^{-1} \left( \E[U_P (X^a - W)] +1\right)}{ \E\left[\hat{U}_P\left(X^{a^*} - W^*\right)\right]} \geq  \frac{\lim_{\gamma_P \rightarrow 0} \gamma_P^{-1} \left( \E[U_P (X^a - W)] \right)}{ \E\left[\hat{U}_P\left(X^{a^*} - W^*\right)\right]} \\\
& \geq \frac{\lim_{\gamma_P \rightarrow 0} \gamma_P^{-1} \left( \E\left[\left|U_P\left(X^a -  \kappa(a) \right)\right|^{\frac{\gamma_A}{\gamma_A + \gamma_P}}\right]^{\frac{\gamma_A+\gamma_P}{\gamma_A}} \times \E\left[ U_A\left(W-  \kappa(a) \right)\right]^{-\frac{\gamma_P}{\gamma_A}}\right) }{\E\left[\left|\hat{U}_P\left(X^{a^*} -  \kappa(a^*) \right)\right|^{\frac{\gamma_A}{\gamma_A + \hat{\gamma}_P}}\right]^{\frac{\gamma_A+\hat{\gamma}_P}{\gamma_A}}  \times \E\left[ U_A\left(W^*-  \kappa(a^*) \right)\right]^{-\frac{\hat{\gamma}_P}{\gamma_A}}}\\
& = \frac{\lim_{\gamma_P \rightarrow 0} \gamma_P^{-1} \left( \E\left[\left|U_P\left(X^a -  \kappa(a) \right)\right|^{\frac{\gamma_A}{\gamma_A + \gamma_P}}\right]^{\frac{\gamma_A+\gamma_P}{\gamma_A}} \times \E\left[ U_A\left(W-  \kappa(a) \right)\right]^{-\frac{\gamma_P}{\gamma_A}} \right) }{\E\left[\left|\hat{U}_P\left(X^{a^*} -  \kappa(a^*) \right)\right|^{\frac{\gamma_A}{\gamma_A + \hat{\gamma}_P}}\right]^{\frac{\gamma_A+\hat{\gamma}_P}{\gamma_A}}  \times \hat{U}_P(y_0)}\\
& \geq \frac{\lim_{\gamma_P \rightarrow 0} \gamma_P^{-1} \left( \E\left[\left|U_P\left(X^a -  \kappa(a) \right)\right|^{\frac{\gamma_A}{\gamma_A + \gamma_P}}\right]^{\frac{\gamma_A+\gamma_P}{\gamma_A}} \times U_P(y_0)  \right)  }{\E\left[\left|\hat{U}_P\left(X^{a^*} -  \kappa(a^*) \right)\right|^{\frac{\gamma_A}{\gamma_A + \hat{\gamma}_P}}\right]^{\frac{\gamma_A+\hat{\gamma}_P}{\gamma_A}}  \times \hat{U}_P(y_0)}\\
&= \lim_{\gamma_P \rightarrow 0} \gamma_P^{-1}  \left( \frac{ \E\left[\left|U_P\left(X^a -  \kappa(a) \right)\right|^{\frac{\gamma_A}{\gamma_A + \gamma_P}}\right]^{\frac{\gamma_A+\gamma_P}{\gamma_A}} }{\E\left[\left|\hat{U}_P\left(X^{a^*} -  \kappa(a^*) \right)\right|^{\frac{\gamma_A}{\gamma_A + \hat{\gamma}_P}}\right]^{\frac{\gamma_A+ \hat{\gamma}_P}{\gamma_A}} } \times \frac{U_P(y_0)}{\hat{U}_P(y_0)} \right).
\end{align*}

\newpage

\section{Conclusion}
\label{section:conc}

This paper uses the Reverse H\"older inequality to derive a new approach to the Risk-Sharing Principal-Agent problem. Through a specific decomposition of the Principal's expected utility (that relies of the multiplicative property of exponential utility functions) we are able to extract the participation constraint in its expectation form. We are then able to to prove existence and uniqueness of the optimal Risk-Sharing plan, whilst also characterizing the plan whilst and making the Borch rule appear. We note that this analysis allows for general hypothesis on the underlying model and works very similarly in both discrete and continuous time settings. It also extends to the risk-neutral case. As a by-product of this work, we are able to analyze the effect of enforcing a sub-optimal action and also provide some insight into the parameters affecting a Principal's choice between two Agents. Another natural extension to this analysis may be that of choosing a sub-optimal wage. Such a choice may make sense for many reasons such as wanting to enforce limited liability, and is the topic of ongoing research by the authors.

\section{Acknowledgements}
The authors wish to thank St\'ephane Villeneuve for insightful discussions and comments; and the ANR Pacman for financial support. 
 

\newpage

\end{document}